\documentclass[11pt]{article}

\usepackage{amsfonts}
\usepackage{mathrsfs}
\usepackage{amssymb}
\usepackage{stmaryrd}
\usepackage{amsmath}
\usepackage{graphicx}
\usepackage{subfig}
\usepackage{color}
\usepackage{ifpdf}
\usepackage{multirow}
\usepackage{arydshln}

\newtheorem{assumption}{Assumption}
\newtheorem{remark}{Remark}
\newtheorem{theorem}{Theorem}

\newtheorem{corollary}{Corollary}
\newtheorem{definition}{Definition}

\newenvironment{proof}{\makebox[7ex][l]{\it Proof:\/}}{\hfill\/ \hfill\/ {\it
Q.E.D.} \vspace{0.5ex}\\}

\begin{document}

\title{\LARGE \bf
IQC-Based Output-Feedback Control of LPV Systems \\
with Time-Varying Input Delays
}

\author{Fen Wu\thanks{Email: {\tt fwu@ncsu.edu}. Phone: (919) 515-5268.}
\vspace*{0.1in} \\
Department of Mechanical and Aerospace Engineering \\
North Carolina State University \\
Raleigh, NC 27695, USA}

\date{}

\maketitle

\begin{abstract}
Input delays are a common source of performance degradation and instability in control systems.
This paper addresses the $\mathcal{H}_\infty$ output-feedback control problem for LPV systems with time-varying input delays under the integral quadratic constraint (IQC) framework.
By integrating parameter-dependent Lyapunov functions with dynamic IQC multipliers, we derive convex, delay-dependent synthesis conditions formulated as parameter-dependent LMIs, enabled by the proposed exact-memory controller structure.
An explicit controller reconstruction formula is provided to recover the LPV controller from the LMI solution, avoiding the need to specify the functional form of the parameter-dependent controller gains.
While the synthesis problem for memoryless control is inherently non-convex, the proposed approach demonstrates significant performance improvement, reduced conservatism, and computational efficiency for standard output-feedback design.
Numerical examples illustrate the effectiveness and broad applicability of the method to LPV systems with time-varying input delays.
\end{abstract}

{\bf Keywords:}
Time-varying input delay; integral quadratic constraints
(IQCs); parameter-depedent Lyapuno functions; output-feedback control; linear matrix inequalities (LMIs).

\section{Introduction}
\label{Sec.Intro}

Time delay is one of the most common nonlinear phenomena encountered in a wide range of engineering systems, including networked control systems, process control, and mechanical systems. The presence of delays often degrades system performance and may even lead to instability if not properly addressed. As a result, stability analysis and control synthesis for time-delay systems have attracted significant attention in the control community over the past several decades. Extensive results have been reported in the literature for both analysis and control design of time-delay systems. Various stability and performance criteria have been developed using different mathematical tools \cite{Fri.TAC06,KaoR.Au07}, while numerous control synthesis approaches have been proposed to address delay-dependent stabilization problems under different system configurations \cite{OliG.Au04,LeeMKP.Au04,YueH.Au05,Krs.CSM10}. Comprehensive surveys of this research area can be found in \cite{Ric.Au03,NicG.B04,XuL.IJSS08,Fri.B14}.

Existing approaches for time-delay systems can generally be classified into two main categories: frequency-domain methods and time-domain Lyapunov-based approaches. The frequency-domain approach determines stability by examining whether the roots of the associated characteristic equation lie in the left half-plane \cite{NicG.B04}. However, this approach is typically limited to systems with constant delays and is often complicated by the transcendental nature of the characteristic equation, which possesses infinitely many roots.  Most existing results rely on time-domain methods based on Lyapunov-Razumikhin functions (LRFs) or Lyapunov-Krasovskii functionals (LKFs) \cite{XuL.IJSS08}. The key idea of these approaches is to construct suitable Lyapunov functionals that explicitly capture delay-dependent system dynamics. Considerable efforts have been devoted to reducing conservatism and improving computational tractability through various techniques, including the Newton–Leibniz formula \cite{XuL.IJSS08}, bounding techniques \cite{MooPKL.IJC01}, descriptor system formulations \cite{LeeMKP.Au04}, discretized LKF methods \cite{Gu.IJC97}, slack-variable techniques \cite{XuL.TAC07}, and delay-partitioning approaches \cite{DuLSW.IET09}.

Despite the extensive literature on Lyapunov-based delay analysis, applying these results to controller synthesis remains challenging. In particular, most delay control design problems involve nonlinear couplings between controller gain matrices and auxiliary variables such as Lyapunov matrices and slack variables. Consequently, the resulting synthesis conditions are typically non-convex and appear in the form of bilinear matrix inequalities (BMIs). This difficulty arises even in state-feedback problems and becomes more pronounced for output-feedback control. Representative examples include distributed state-feedback control for time-varying input delay systems \cite{YueH.Au05}, output-feedback control of delayed linear systems \cite{DuLS.Au10}, and $\mathcal{H}_\infty$ gain-scheduling control of delayed LPV systems \cite{WuG.Au01}. To address the associated non-convexity, existing approaches either impose structural constraints on decision variables to obtain convex formulations \cite{OliG.Au04,Fri.B14}, or rely on iterative numerical techniques such as the D–K iteration \cite{ZhouDG.B96}. However, the former approach often introduces additional conservatism, while the latter typically incurs significant computational cost and only guarantees locally optimal solutions. For systems with input delays, predictor-based control methods have also been proposed \cite{Krs.CSM10,Krs.TAC10}. Although these methods can effectively compensate for constant input delays, their implementation often requires full state information and the solution of partial differential equations, which may limit their practical applicability.

An alternative and increasingly attractive framework for delay system analysis is based on integral quadratic constraints (IQCs) \cite{MegR.TAC97}. IQC theory provides a powerful tool for characterizing uncertainties and nonlinearities in feedback interconnections and has been widely used in robust control analysis. In recent years, several works have developed IQC characterizations for delay operators, including both constant and time-varying delays \cite{JunS.TAC02,KaoR.Au07,Kao.TAC12}. Compared with classical Lyapunov-based approaches, IQC-based analysis often offers greater flexibility in handling multiple delays, dynamic uncertainties, and nonlinear effects, and can lead to improved stability margins.

Despite these developments, most existing IQC-based results for delayed systems focus primarily on stability and performance analysis, while controller synthesis within the IQC framework has received relatively limited attention except for \cite{VeeS.IJRNC14,YuaW.TCNS17}. For example, \cite{PfiS.IJRNC15} develops an LPV analysis framework that combines parameter-dependent Lyapunov functions with IQC descriptions of uncertainty, resulting in computationally efficient conditions for robust performance analysis. The results demonstrate that dynamic IQCs together with parameter-dependent Lyapunov functions can significantly improve performance compared with traditional analysis techniques. Recent studies have begun to explore IQC-based control synthesis for delayed systems. In particular, \cite{YuaW.TCYB16} develops a dynamic IQC-based exact-memory control framework for uncertain linear systems with time-varying state delays, leading to convex $\mathcal{H}_\infty$ synthesis conditions. Similarly, \cite{YuaW.Au17} proposes an exact-memory output-feedback control scheme for linear systems with time-varying input delays, resulting in convex LMI-based synthesis conditions for both memoryless and exact-memory controllers. These works demonstrate that the combination of IQC analysis and dissipation theory can effectively overcome the non-convexity difficulties commonly encountered in delay control design.

Motivated by recent developments in IQC-based delay analysis and control, this paper investigates the $\mathcal{H}_\infty$ output-feedback control problem for linear parameter-varying (LPV) systems with time-varying input delays. Compared with the linear time-invariant case, the LPV setting introduces additional challenges due to the interaction between parameter variations and delay dynamics. To address these challenges, we develop a control framework that integrates parameter-dependent Lyapunov functions with dynamic IQC characterizations of delay operators. An exact-memory controller architecture is employed, consisting of an LPV  output-feedback controller augmented with an internal delay loop. Similar in spirit to gain-scheduling control \cite{Sha.CSM02}, the embedded delay loop replicates the plant input delay within the controller, allowing the delay operator to be explicitly incorporated into the closed-loop interconnection. This structure enables the use of dissipation-based IQC analysis for controller synthesis. Based on this framework, new delay-dependent synthesis conditions are derived for LPV systems with time-varying input delays. By combining parameter-dependent Lyapunov functions with dynamic IQC multipliers, the resulting conditions can be formulated as a set of linear matrix inequalities (LMIs). Unlike conventional approaches \cite{YuaW.TCYB16,YuaW.Au17}, the proposed synthesis conditions do not explicitly involve controller gain variables, avoiding the need to parameterize controller gains as functions of scheduling parameters. An explicit controller reconstruction formula is provided to recover the LPV controller from the solution of the synthesis conditions. Consequently, the controller design problem can be formulated as a convex optimization problem that can be efficiently solved using standard semidefinite programming techniques \cite{BoyGFB.B04}. The proposed framework significantly simplifies delay control design and provides a systematic approach for synthesizing controllers for LPV systems with input delays. It also accommodates cases where the input delay is not directly measurable through a memoryless control structure. Numerical examples are provided to demonstrate the effectiveness and advantages of the proposed method.

Overall, the proposed IQC-based framework offers a fundamentally different perspective for the analysis and synthesis of delayed LPV systems. By leveraging the flexibility of IQC representations together with parameter-dependent Lyapunov functions, the method provides improved computational tractability and reduced conservatism compared with traditional Lyapunov–Krasovskii-based techniques \cite{MooPKL.IJC01,XuLZ.Au06,DuLS.Au10}. To the best of our knowledge, this paper presents the first IQC-based convex synthesis framework for $\mathcal{H}_\infty$ output-feedback control of LPV systems with time-varying input delays, providing a computationally tractable alternative to conventional Lyapunov–Krasovskii functional-based delay control methods. The main contributions of this paper are summarized as follows.

\begin{itemize}
\item
This paper develops a new control synthesis framework for linear parameter-varying (LPV) systems subject to time-varying input delays using the integral quadratic constraint (IQC) methodology. By integrating parameter-dependent Lyapunov functions with dynamic IQC characterizations of delay operators, the proposed approach provides a systematic method for stability and $\mathcal{H}_\infty$ performance analysis and controller synthesis of delayed LPV systems. This significantly extends existing IQC-based delay control techniques, which have primarily been developed for linear time-invariant systems.
\item
A key feature of the proposed method is the derivation of delay-dependent $\mathcal{H}_\infty$ output-feedback synthesis conditions that do not explicitly involve controller gain matrices. This avoids the commonly encountered difficulty of parameterizing controller gains as functions of scheduling parameters in LPV control design. The resulting synthesis conditions can be formulated as a set of linear matrix inequalities (LMIs), leading to a convex optimization problem that can be efficiently solved using semidefinite programming. An explicit controller reconstruction formula is also provided to recover the LPV controller from the solution of the synthesis conditions.
\item
The proposed framework incorporates parameter-dependent Lyapunov functions to explicitly capture the dependence of system dynamics on scheduling parameters. This allows the synthesis conditions to exploit parameter variation information more effectively and leads to improved stabilizability and closed-loop performance compared with approaches based on parameter-independent Lyapunov functions.
\end{itemize}

{\bf Notation}.
$\mathbf{RL}_\infty$ denotes the set of proper rational functions with real coefficients and no poles on the imaginary axis, and
$\mathbf{RH}\infty$ denotes the subset of functions in $\mathbf{RL}_\infty$ that are analytic in the closed right-half complex plane.
$\mathbf{RL}_\infty^{m\times n}$
and $\mathbf{RH}\infty^{m\times n}$ denote the sets of $m\times n$ transfer matrices with entries whose entries belong to  $\mathbf{RL}_\infty$ and $\mathbf{RH}_\infty$, respectively.
For a transfer matrix $G(s)$, its para-Hermitian conjugate is defined as $G^{\thicksim}(s):=G(-\bar{s})^*$.
$\mathbf{S}^n$ denotes the set of $n\times n$ real symmetric matrices, and $\mathbf{S}_+^n$ denotes the subset of positive-definite matrices.
For a square matrix $X$, ${\rm He}{X} := X + X^T$.
In symmetric block matrices, the symbol $\star$ is used to denote entries that are determined by symmetry.

For a vector $x \in \mathbf{C}^n$, $|x| := (x^*x)^{1/2}$ denotes the Euclidean norm.
The space $L_{2+}^n$ consists of square-integrable signals $u:[0,\infty) \rightarrow \mathbf{R}^n$ with norm
$\|u\|_2 := \left(\int^\infty_0 u^T(t) u(t) dt \right)^{1/2} < \infty$.
Given $u \in L_{2+}^n$, $u_T$ the truncated signal $u_T$ is defined as $u_T(t)=u(t)$ for $t\leq T$ and $u_T(t)=0$ otherwise.
The extended space $L_{2e+}^n$ consists of signals $u$ such that $u_T\in L_{2+}^n$ for all $T\geq 0$.

The remainder of the paper is organized as follows.
Section \ref{Sec.Prob} briefly reviews the IQC framework and formulates the control problem for LPV systems with time-varying input delays, together with the proposed delay-embedded controller structure.
Section \ref{Sec.Main} presents the main results, including convex synthesis conditions for delay-dependent (exact-memory) LPV output-feedback control, as well as a discussion of the corresponding memoryless control formulation.
Numerical examples demonstrating the effectiveness and reduced conservatism of the proposed approach are provided in Section \ref{Sec.Ex}.
Finally, concluding remarks are given in Section \ref{Sec.Con}.

\section{Problem Formulation}
\label{Sec.Prob}

IQCs provide a powerful framework for modeling a variety of nonlinearities and have been successfully applied to robust stability analysis of many dynamical systems (see, e.g., \cite{KaoR.Au07,Kao.TAC12,MegR.TAC97,Sei.TAC14}). Libraries of IQC multipliers for time-varying delays are available in \cite{KaoR.Au07} (continuous-time) and \cite{Kao.TAC12} (discrete-time).
We first recall a basic definition of dynamic IQCs.

\begin{definition}[\cite{Sei.TAC14}]
\label{Def.tIQC}
Let $\Pi \in \mathbf{RL}_\infty^{(m_1+m_2) \times (m_1+m_2)}$
be a proper, rational function, called a \emph{multiplier}, such that $\Pi = \Psi^{\thicksim} W \Psi$ with $W \in
\mathbf{S}^{n_z}$ and $\Psi \in \mathbf{RH}_\infty^{n_z \times
(m_1+m_2)}$.
Two signals $v \in L_{2e+}^{m_1}$ and $w \in L_{2e+}^{m_2}$
satisfy the IQC defined by the multiplier $\Pi$, and $(\Psi, W)$
is a hard IQC factorization of $\Pi$ if
\begin{align}
\label{Def-hardIQC-ine}
\int^T_0 z^T(t) W z(t) dt \geq 0, \quad \forall T \geq 0,
\end{align}
where $z \in \mathbf{R}^{n_z}$ is the filtered output of
$\Psi$ driven by $(v,w)$ with zero initial conditions.
A bounded, causal operator $\mathcal{S}: L_{2e+}^{m_1} \rightarrow L_{2e+}^{m_2}$ satisfies the IQC defined by $\Pi$ if condition (\ref{Def-hardIQC-ine}) holds for all $v \in L_{2e+}^{m_1}$, $w = \mathcal{S}(v)$ and all $T \geq 0$.
\end{definition}

We consider a class of linear parameter-varying (LPV) systems with bounded time-varying input delays:
\begin{align}
\label{Plant}
\mathcal{P}: \quad
    \begin{bmatrix}
        \dot{x}_p \\
        e \\
        y
    \end{bmatrix} &= \begin{bmatrix}
        A_p(\rho) & B_{p1}(\rho) & B_{p2}(\rho) \\
        C_{p1}(\rho) & D_{p11}(\rho) & D_{p12}(\rho) \\
        C_{p2}(\rho) & D_{p21}(\rho) & 0
    \end{bmatrix} \begin{bmatrix}
        x_p \\
        d \\
        \mathcal{D}_{\bar{\tau},r}(u)
    \end{bmatrix},
\end{align}
where $x_p \in \mathbf{R}^{n_p}$ is the plant state, $u \in \mathbf{R}^{n_u}$ is the control input, $e \in \mathbf{R}^{n_e}$ is the controlled output, $d \in \mathbf{R}^{n_d}$ is a disturbance, $y \in \mathbf{R}^{n_y}$ is the measurement output, and $\mathcal{D}_{\bar{\tau},r}(u) =  u(t - \tau(t))$ represents a time-varying delay $\tau(t)
\in {\cal T}_{\bar{\tau},r} := \{ \tau: \mathbf{R}_+ \rightarrow [0,\bar{\tau}], \ |\dot{\tau}(t)|\leq r \}$.
All state-space matrices are continuous functions of the scheduling parameter
$\rho \in {\cal P} \subset {\bf R}^s$, with parameter variation rate bounded by the convex polytope
\[
{\cal V} = \left\{ \nu: \underline{\nu}_k \leq \dot{\rho}_k \leq \bar{\nu}_k, k \in {\bf I}[1,s] \right\}.
\]

Define $w = \mathcal{S}_{\bar{\tau},r}(u) := u - \mathcal{D}_{\bar{\tau},r}(u)$, the LPV system (\ref{Plant}) can be equivalently expressed as a feedback interconnection:
\begin{align}
\label{Plant2}
\begin{aligned}
    \begin{bmatrix}
        \dot{x}_p \\
        e\\
        y
    \end{bmatrix}&=
    \begin{bmatrix}
        A_p(\rho) & -B_{p2}(\rho) & B_{p1}(\rho) & B_{p2}(\rho) \\
        C_{p1}(\rho) & -D_{p12}(\rho) & D_{p11}(\rho) & D_{p12}(\rho) \\
        C_{p2}(\rho) & 0 & D_{p21}(\rho) & 0
    \end{bmatrix}
    \begin{bmatrix}
        x_p\\
        w \\
        d\\
        u
    \end{bmatrix}, \\
    w &= \mathcal{S}_{\bar{\tau},r}(u).
\end{aligned}
\end{align}

We also make the following assumptions:
\begin{assumption}
\label{Ass1}
The triple $(A_p(\rho),B_{p2}(\rho),C_{p2}(\rho))$ is parametrically stabilizable and detectable. Additionally, $D_{p12}(\rho), D_{p21}(\rho)$ has full column and row rank, respectively.
\end{assumption}

\begin{assumption}
\label{Ass234}
The delay operator $\mathcal{S}_{\bar{\tau},r}$ satisfies a collection of IQCs defined by $\{\Pi_k\}_{k=1}^{N_\lambda} \subset \mathbf{RL}_\infty^{2n_u \times 2n_u}$, each admitting a $J_{n_u,n_u}$-spectral factorization $(\Psi_k,W_k)$ as detailed in \cite{Sei.TAC14}.
Specifically, we consider $(\Psi_k, W_k)$ in the form of $\Psi_k = \begin{bmatrix} \Psi_{11,k} & \Psi_{12,k} \\ 0 & I_{n_u} \end{bmatrix}$ and $W_k = \begin{bmatrix} X_k & 0 \\ 0 & -X_k \end{bmatrix}$ with $X_k \in \mathbf{S}_+^{n_u \times n_u}$.
\end{assumption}

Assumption \ref{Ass1} guarantees the existence of a stabilizing feedback controller for the non-delayed plant. Assumption \ref{Ass234} bounds the input-output behavior of the delay operator. Strict definiteness of the IQC blocks ensures the existence of a stable, minimum-phase spectral factorization \cite{Sei.TAC14}. These assumptions are standard and do not limit generality.

Assuming Assumption \ref{Ass234} holds, each IQC multiplier $\{\Pi_k\}_{k=1}^{N_\lambda}$ associated with the delay operator $\mathcal{S}_{\bar{\tau},r}$ can be factorized using a $J{n_u,n_u}$-spectral decomposition into a pair $(\Psi_k, W_k)$, following standard procedures \cite{Sei.TAC14, PfiS.Au15}.
Its corresponding state-space realization is given by the LTI system:
\begin{align}
\label{SF-Psi-LTI}
\begin{aligned}
    \begin{bmatrix}
        \dot{x}_{\psi} \\
        z_{1,k} \\
        z_{2,k}
    \end{bmatrix} &= \begin{bmatrix}
        A_{\psi} & B_{\psi1} & B_{\psi2} \\
        C_{\psi,k} & D_{\psi1,k} & D_{\psi2,k} \\
        0 & 0 & I
    \end{bmatrix} \begin{bmatrix}
        x_\psi \\
        u \\
        w
    \end{bmatrix}, \quad k \in {\bf I}[1,N_\lambda]
\end{aligned}
\end{align}
where $x_\psi \in \mathbf{R}^{n_{\psi}}$ is the state vector of $\Psi_k$ with $x_\psi(0)=0$, and $z_{1,k}, z_{2,k} \in \mathbf{R}^{n_u}$ are the outputs.

For controller implementation, we assume that the time-varying delay $\tau(t)$ is measurable in real time, which enables the proposed exact-memory controller structure. We also briefly consider the case where delay information is unavailable, which leads to a memoryless controller; however, as discussed later, the associated synthesis problem is non-convex and more challenging.

Our objective is to design an output-feedback controller for (\ref{Plant}) such that
the resulting closed-loop system is stable for all admissible $(\rho, \dot{\rho}) \in \mathcal{P} \times \mathcal{V}$ and input delays $\tau(t) \in {\cal T}_{\bar{\tau},r}$, and
the closed-loop system achieves a desired $\mathcal{L}_2$-gain performance from $d$ to $e$.
The associated synthesis conditions are formulated under the IQC framework as convex LMIs on all design variables, including the scaling matrices of the IQC multipliers $\{\Pi_k\}_{k=1}^{N_\lambda}$.

To achieve this, we employ a delay-dependent controller structure following \cite{YuaW.Au17}, which integrates a standard LPV output-feedback law with an internal delay loop. As illustrated in Fig. \ref{Fig.CL}, the loop explicitly incorporates the delayed input signal, enabling online scheduling of the controller gains based on the real-time delay $\tau(t)$.

\begin{figure}[htb]
\centering
\includegraphics[width=0.55\textwidth]{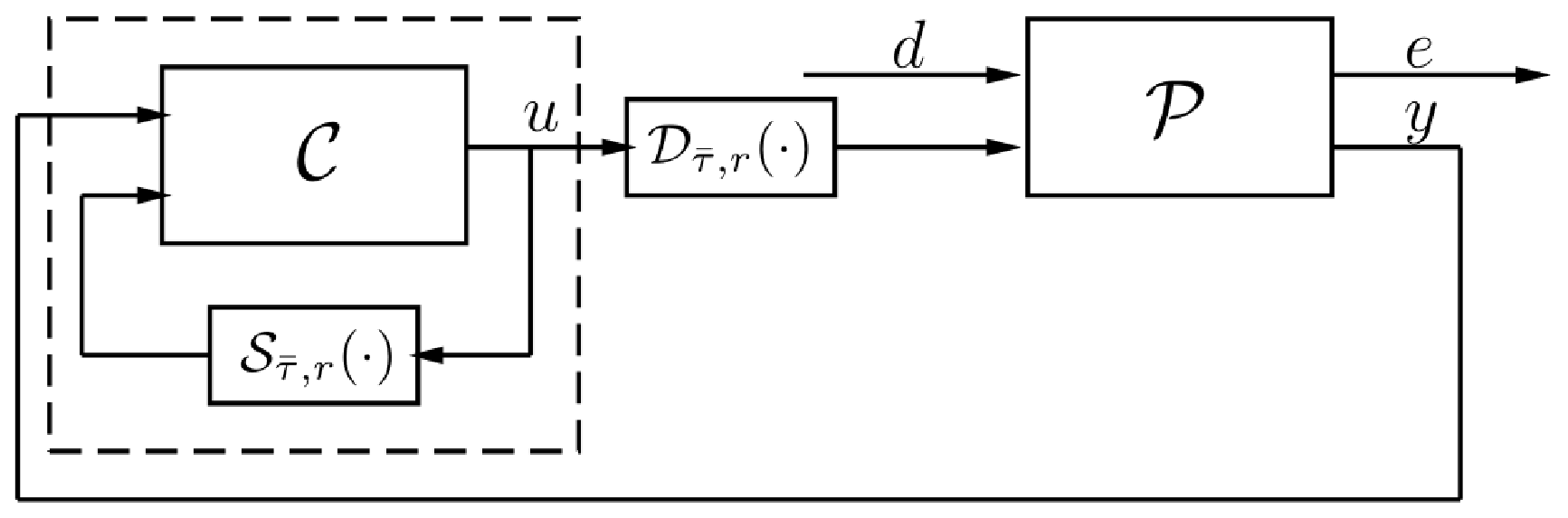}
\caption{The feedback system with a delay loop.}
\label{Fig.CL}
\end{figure}

In the block diagram, the $\mathcal{P}$ block represents the LPV plant with a time-varying input delay $\mathcal{D}_{\bar{\tau},r}(\cdot)$. The proposed control strategy (dashed box in Fig. \ref{Fig.CL}) combines a linear feedback law $\mathcal{C}$ from $y$ to $u$ with the internal delay loop via the nonlinear delay operator $\mathcal{S}_{\bar{\tau},r}(\cdot)$. This embodies the notion of exact-memory control, where the loop effectively “memorizes” all past input signals over $[t-\bar{\tau}, t]$, allowing the controller to adjust its gains in real time. Such a structure is critical for enabling convex synthesis conditions.

Controllers with embedded delay loops of this type are widely adopted in the literature (see, e.g., \cite{OliG.Au04, Fri.B14, Bri.B15, WuG.Au01}) and are practical in many engineering applications, including chemical processes, robotic systems, and internal combustion engines, where delays depend on measurable operating conditions or system parameters. By leveraging this exact-memory structure, the proposed framework provides a systematic, tractable, and performance-guaranteed approach for output-feedback control of LPV systems with time-varying input delays.

%

\section{Output Feedback Control Synthesis}
\label{Sec.Main}

In this section, we study the LPV delay control problem under the IQC framework using two control strategies and highlight the advantages of the proposed delay-dependent control structure. We first present the exact-memory control synthesis.

\subsection{Exact-Memory Control}

By combining the plant dynamics (\ref{Plant}) with the IQC-induced dynamics (\ref{SF-Psi-LTI}), we define the augmented state $x_{aug} = [x_p^T \ x_\psi^T]^T$ of dimension $n_{aug} = n_p + n_{\psi}$. The augmented system can be compactly written as:
\begin{align*}
\begin{bmatrix}
\dot{x}_{aug} \\ z_{1,k} \\ z_{2,k} \\ e \\ y \end{bmatrix} &= \left[ \begin{array}{cc:ccc}
    A_p(\rho) & 0 & -B_{p2}(\rho) & B_{p1}(\rho) & B_{p2}(\rho) \\
    0 & A_\psi & B_{\psi2} & 0 & B_{\psi1} \\ \hdashline
    0 & C_{\psi,k} & D_{\psi2,k} & 0 & D_{\psi1,k} \\
    0 & 0 & I & 0 & 0 \\
    C_{p1}(\rho) & 0 & -D_{p12}(\rho) & D_{p11}(\rho) & D_{p12}(\rho) \\
    C_{p2}(\rho) & 0 & 0 & D_{p21}(\rho) & 0
\end{array} \right]
\begin{bmatrix}
x_{aug} \\ w \\ d \\ u
\end{bmatrix} \\
&:= \begin{bmatrix}
    A_{aug}(\rho) & B_{aug0}(\rho) & B_{aug1}(\rho) & B_{aug2}(\rho) \\
    C_{aug0,k} & D_{\psi2,k} & 0 & D_{\psi1,k} \\
    0 & I & 0 & 0 \\
    C_{aug1}(\rho) & -D_{p12}(\rho) & D_{p11}(\rho) & D_{p12}(\rho) \\
    C_{aug2}(\rho) & 0 & D_{p21}(\rho) & 0
\end{bmatrix} \begin{bmatrix}
    x_{aug} \\ w \\ d \\ u
\end{bmatrix}.
\end{align*}

The exact-memory LPV output-feedback controller (Fig. \ref{Fig.CL}) is defined as:
\begin{align}
\label{OF-Controller}
\mathcal{C}: \quad \begin{aligned}
    \begin{bmatrix}
        \dot{x}_c \\
        u
    \end{bmatrix} &= \begin{bmatrix}
        A_c(\rho) & B_{c1}(\rho) & B_{c2}(\rho) \\
        C_c(\rho) & D_{c1}(\rho) & D_{c2}(\rho)
    \end{bmatrix} \begin{bmatrix}
        x_c \\
        y \\
        \mathcal{S}_{\bar{\tau},r}(u)
    \end{bmatrix},
\end{aligned}
\end{align}
where $x_c \in \mathbf{R}^{n_c}$ is the controller state, and the controller gain matrices $A_c (\cdot), B_{c1}(\cdot), B_{c2}(\cdot), C_c(\cdot)$, $D_{c1}(\cdot), D_{c2}(\cdot)$ are design variables.

The extended closed-loop system absorbing IQC dynamics is written as:
\begin{align}
\label{SF-CloseLoop}
\begin{aligned}
    \begin{bmatrix}
        \dot{x}_{cl} \\
        z_{1,k} \\
        z_{2,k} \\
        e
    \end{bmatrix} &= \begin{bmatrix}
        A_{cl}(\rho) & B_{cl1}(\rho) & B_{cl2}(\rho) \\
        C_{cl1,k}(\rho) & D_{cl11,k}(\rho) & D_{cl12,k}(\rho) \\
        0 & I & 0 \\
        C_{cl2}(\rho) & D_{cl21}(\rho) & D_{cl22}(\rho)
    \end{bmatrix} \begin{bmatrix}
        x_{cl} \\
        w \\
        d
    \end{bmatrix}, \\
    w &= \mathcal{S}_{\bar{\tau},r}(u)
\end{aligned}
\end{align}
with $x_{cl} = [x_{aug}^T \ x_c^T]^T \in \mathbf{R}^{n_{aug}+n_c}$. The system matrices are defined in the following equation
\begin{align}
\label{OF-CloseLoop}
\begin{bmatrix}
   A_{cl}(\rho) & B_{cl1}(\rho) & B_{cl2}(\rho) \\
   C_{cl1,k}(\rho) & D_{cl11,k}(\rho) & D_{cl12,k}(\rho) \\
   C_{cl2}(\rho) & D_{cl21}(\rho) & D_{cl22}(\rho)
\end{bmatrix} &= \left[ \begin{array}{cc:cc}
   A_{aug}(\rho) & 0 & B_{aug0}(\rho) & B_{aug1}(\rho) \\
   0 & 0 & 0 & 0 \\ \hdashline
   C_{aug0,k} & 0 & D_{\psi2,k} & 0 \\
   C_{aug1}(\rho) & 0 & -D_{p12}(\rho) & D_{p11}(\rho)
\end{array} \right] \nonumber \\
& \hspace*{-1.25in} + \left[ \begin{array}{cc}
   0 & B_{aug2}(\rho) \\
   I & 0 \\ \hdashline
   0 & D_{\psi1,k} \\
   0 & D_{p12}(\rho)
\end{array} \right] \begin{bmatrix}
    A_c(\rho) & B_{c2}(\rho) & B_{c1}(\rho) \\
    C_{c}(\rho) & D_{c2}(\rho) & D_{c1}(\rho)
\end{bmatrix} \left[ \begin{array}{cc:cc}
    0 & I & 0 & 0 \\
    0 & 0 & I & 0 \\
    C_{aug2}(\rho) & 0 & 0 & D_{p21}(\rho)
\end{array} \right].
\end{align}

With this formulation, the time-domain IQC and dissipation inequality framework \cite{Sei.TAC14} can be directly applied to derive convex synthesis conditions for the exact-memory output-feedback controller.
The following theorem provides the synthesis conditions for $\mathcal{H}_\infty$ exact-memory LPV output-feedback control of the input-delay LPV system under the IQC and dissipation inequality framework.

\begin{theorem}
\label{OF-Thm-Syn}
Consider the input-delayed LPV plant (\ref{Plant}). If there exist positive-definite matrix functions $R, S: \mathbf{R}_+ \rightarrow \mathbf{S}_+^{n_{aug}}$, positive-definite matrix functions $\hat{X}_k \in \mathbf{S}_+^{n_u}, k \in {\bf I}[1, N_{\lambda}]$, a rectangular matrix function $X \in \mathbf{R}^{n_u \times n_u}$, and a positive scalar $\gamma$ such that
\begin{align}
{\cal N}_R^T \begin{bmatrix}
    \left\{ \begin{matrix} {\rm He} \{ A_{aug}(\rho) R(\rho) \} \\ -\{ \underline{\nu}, \bar{\nu} \} \frac{\partial R}{\partial \rho} \end{matrix} \right\} & \star & \star & \star & \star \\
    X^T(\rho) B_{aug0}^T(\rho) & -\displaystyle{\sum^{N_\lambda}_{k=1}} (X(\rho) + X^T(\rho) - X_k(\rho)) & \star & \star & \star \\
    B_{aug1}^T(\rho) & 0 & -\gamma I_{n_d} & \star & \star \\
    C_{aug0} R(\rho) & D_{\psi2} X(\rho) & 0 & -\Lambda(\rho) & \star \\
    C_{aug1}(\rho) R(\rho) & -D_{p12}(\rho) X(\rho) & D_{p11}(\rho) & 0 & -\gamma I_{n_e}
    \end{bmatrix} {\cal N}_R &< 0 \label{OF-Thm-Syn-LMI1} \\
{\cal N}_S^T \begin{bmatrix}
    {\rm He} \{ S(\rho) A_{aug}(\rho) \} + \{ \underline{\nu}, \bar{\nu} \} \frac{\partial S}{\partial \rho} & \star & \star & \star \\
    C_{aug0} & -\Lambda(\rho) & \star & \star \\
    C_{aug1}(\rho) & 0 & -\gamma I_{n_e} & \star \\
    B_{aug1}^T(\rho) S(\rho) & 0 & D_{p11}^T(\rho) & -\gamma I_{n_d}
    \end{bmatrix} {\cal N}_S &< 0 \label{OF-Thm-Syn-LMI2} \\
\begin{bmatrix}
    R(\rho) & I_{n_{aug}} \\
    I_{n_{aug}} & S(\rho)
    \end{bmatrix} &> 0 \label{OF-Thm-Syn-LMI3}
\end{align}
where ${\cal N}_R = {\rm Ker} \begin{bmatrix} B_{aug2}^T(\rho) & 0 & 0 & D_{\psi1}^T & D_{p12}^T(\rho) \end{bmatrix}, {\cal N}_S = {\rm Ker} \begin{bmatrix} C_{aug2}(\rho) & 0 & 0 & D_{p21}(\rho) \end{bmatrix}$, and $\Lambda = {\rm diag} \left\{ X_1(\rho), \cdots, X_k(\rho) \right\}$.
Then the closed-loop system (\ref{SF-CloseLoop}) is asymptotically stable with its $\mathcal{L}_2$ gain less than $\gamma$.
\end{theorem}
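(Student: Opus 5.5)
We would follow the standard route of IQC-based dissipation analysis combined with the elimination (projection) lemma, in the spirit of the LPV synthesis of \cite{YuaW.Au17}, now with parameter-dependent Lyapunov matrices.

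\emph{Step 1: analysis condition.} For the closed loop (\ref{SF-CloseLoop}) we take the storage function $V(x_{cl},\rho)=x_{cl}^T P(\rho) x_{cl}$, with $P(\rho)>0$, on the extended state that contains the IQC filter states $x_\psi$. We will show that if, for all $(\rho,\dot\rho)\in\mathcal{P}\times\mathcal{V}$,
\[
\dot V + \sum_{k} z_k^T W_k z_k + \gamma^{-1} e^Te - \gamma d^Td < 0 ,
\]
then the conclusion of the theorem holds. Here $z_k=(z_{1,k},z_{2,k})$ and the scalings are $W_k={\rm diag}(X_k,-X_k)$. Integrating from $0$ to $T$ with zero initial conditions and using the hard IQCs of Assumption~\ref{Ass234} gives $\int_0^T(\gamma^{-1}e^Te-\gamma d^Td)\,dt<0$, which is the $\mathcal L_2$-gain bound $\gamma$. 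Setting $d=0$ and invoking the hard-IQC stability argument of \cite{Sei.TAC14} gives asymptotic stability. After Schur complements this inequality becomes a matrix inequality in $P$ and the closed-loop matrices (\ref{OF-CloseLoop}). The term $\partial P/\partial\rho\,\dot\rho$ is affine in $\dot\rho$, so it suffices to check it at the vertices $\{\underline\nu,\bar\nu\}$. Since $z_{2,k}=w$, the $-w^TX_kw$ terms are summed and appear in the $(w,w)$ block.

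\emph{Step 2: linearize the controller dependence.} The closed-loop data are affine in the controller matrix $\Omega(\rho)=\begin{bmatrix}A_c&B_{c2}&B_{c1}\\ C_c&D_{c2}&D_{c1}\end{bmatrix}$. We therefore write the inequality as
\[
\Phi(P,\rho)+{\rm He}\{\mathcal{U}^T\Omega\mathcal{V}\}<0 .
\]
The difficulty is that the $w$ channel is an input of the controller, since it is fed through $B_{c2},D_{c2}$. Consequently the $(w,w)$ block interacts with $\Omega$ and with the IQC scaling, and this is where the extra variable $X(\rho)$ enters. We would apply a congruence with ${\rm diag}(\cdot,X(\rho),I,\ldots)$ on the $w$ row. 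We would then use the bound
\[
-X^T\Big(\sum_k X_k\Big)^{-1}X\le -(X+X^T)+\sum_k X_k ,
\]
which follows from $(X-\Lambda_s)^T\Lambda_s^{-1}(X-\Lambda_s)\ge 0$ with $\Lambda_s=\sum_kX_k$. This produces the block $-\sum_k(X+X^T-X_k)$ of (\ref{OF-Thm-Syn-LMI1}). At this point we identify the transformed scalings $\hat X_k$ with $X_k^{-1}$-type quantities appearing in $\Lambda$.

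\emph{Step 3: elimination lemma.} We apply the projection lemma to eliminate $\Omega$. The resulting solvability conditions are ${\cal N}_{\mathcal U}^T\Phi{\cal N}_{\mathcal U}<0$ and ${\cal N}_{\mathcal V}^T\Phi{\cal N}_{\mathcal V}<0$. We partition
\[
P=\begin{bmatrix}S&\star\\ \star&\star\end{bmatrix},\qquad P^{-1}=\begin{bmatrix}R&\star\\ \star&\star\end{bmatrix}.
\]
The projection onto the kernel of $[C_{aug2}\ 0\ D_{p21}]$ is computed directly in $P$. Because the $w$ channel is measured by the controller, its row drops out of ${\cal N}_S$, which explains the absence of a $w$ block in (\ref{OF-Thm-Syn-LMI2}). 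The other projection is computed in the dual (congruence by $P^{-1}$) form and yields (\ref{OF-Thm-Syn-LMI1}). Note that $\dot{(P^{-1})}=-P^{-1}\dot PP^{-1}$, which explains the sign flip of the derivative term in $R$. Full-order controllers with $n_c=n_{aug}$ then exist if and only if (\ref{OF-Thm-Syn-LMI3}) holds, by the standard completion $P=\begin{bmatrix}S&N\\N^T&\star\end{bmatrix}$ with $NM^T=I-SR$. We then solve the resulting LMI in $\Omega(\rho)$ pointwise in $\rho$ to obtain the reconstruction.

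\emph{Main obstacle.} The hardest part is Step 2: handling the $w$ channel so that the dual condition is linear in $R,X,X_k$ jointly. The delay signal enters both as an IQC input and as a controller measurement, and the scalings appear inverted after dualization. We would try the Young-type bound above first. If the block structure does not match, we would instead introduce $X(\rho)$ as a slack variable via Finsler's lemma on the $w$ row before eliminating $\Omega$.
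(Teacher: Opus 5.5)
\textbf{Where the gap is.} Your architecture matches the paper's:
\begin{itemize}
\item an analysis inequality from dissipation plus hard IQCs, followed by a Schur complement;
\item a congruence by $X(\rho)$ on the $w$ row combined with a Young-type bound;
\item the elimination lemma, with one projection taken through $P^{-1}$ (giving (\ref{OF-Thm-Syn-LMI1}) with $-\dot R$) and the other taken directly (giving (\ref{OF-Thm-Syn-LMI2}), where the $w$ row drops out), coupled by (\ref{OF-Thm-Syn-LMI3}).
\end{itemize}
The gap is in Step~2, exactly where the $(w,w)$ block is produced. As written, it does not give the block of (\ref{OF-Thm-Syn-LMI1}).

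With your scaling $W_k={\rm diag}(X_k,-X_k)$, the Schur complement puts $-{\rm diag}(X_k^{-1})$ in the $\Lambda$ slot and $-\sum_k X_k$ in the $(w,w)$ slot. Your bound on $-X^T(\sum_k X_k)^{-1}X$ therefore concerns a matrix that never appears. After the reparametrization you allude to (so that $\Lambda={\rm diag}(X_k)$ is linear), the $(w,w)$ block becomes $-\sum_k X_k^{-1}$, i.e.\ $-X^T(\sum_k X_k^{-1})X$ after the congruence. A single Young bound cannot linearize this, because the natural choice $\Lambda_s=(\sum_k X_k^{-1})^{-1}$ is not linear in the $X_k$.

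If you rescue your chain via $(\sum_k X_k)^{-1}\le\sum_k X_k^{-1}$, you obtain the block $-(X+X^T)+\sum_k X_k$. The theorem's block is instead $-\sum_k(X+X^T-X_k)=-N_\lambda(X+X^T)+\sum_k X_k$. For $N_\lambda\ge 2$ these differ in the wrong direction:
\begin{itemize}
\item whenever either LMI holds, its $(w,w)$ block forces $X+X^T>0$;
\item hence the theorem's matrix is the smaller one, so its hypothesis is strictly weaker than what your chain certifies;
\item for instance, at scalar $X=X_1=X_2=1$ the $(w,w)$ entries are $-2$ versus $0$.
\end{itemize}
So your argument proves the statement only for $N_\lambda=1$.

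\textbf{The fix.} This is what the paper does: use the scalings $X_k^{-1}$ in the IQC terms from the outset. This is legitimate, since $X_k\mapsto X_k^{-1}$ is a bijection of the positive-definite cone. The Schur complement then gives $-\Lambda$ and $-\sum_k X_k^{-1}$ directly. Now apply Young's inequality termwise: $(X-X_k)^T X_k^{-1}(X-X_k)\ge 0$ gives $-X^T X_k^{-1}X\le-(X+X^T-X_k)$ for each $k$. Summing over $k$ yields exactly $-\sum_k(X+X^T-X_k)$, so the Finsler fallback is unnecessary.

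You should also record that $X$ must be nonsingular. This is needed both to undo the congruence and to recover $B_{c2},D_{c2}$ from the eliminated variable, which now contains $B_{c2}X$ and $D_{c2}X$. Nonsingularity follows from (\ref{OF-Thm-Syn-LMI1}) itself: the $w$-coordinates are unconstrained in ${\cal N}_R$, which forces $\sum_k(X+X^T-X_k)>0$ and hence $X+X^T>0$.
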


\begin{proof}
We consider the extended closed-loop system in the standard feedback interconnection form (\ref{SF-CloseLoop}) together with the LPV controller structure (\ref{OF-CloseLoop}).

According to \cite{PfiS.IJRNC15}, the closed-loop system with the delay loop is internally stable and achieves an $\mathcal{L}_2$-gain less than $\gamma$ if there exist a matrix function $P \in \mathbf{S}_+^{n_{aug}+n_c}$, and positive definite matrix functions $X_k, k\in {\bf I}[1,N_\lambda]$   such that
\begin{align}
\label{SF-Prf-Syn-LMI*}
\begin{aligned}
    &\begin{bmatrix}
        {\rm He} \{PA_{cl}\}+\dot{P} & \star & \star \\
        B_{cl1}^T P & 0 & \star  \\
        B_{cl2}^T P & 0 & -\gamma I_{n_d}
    \end{bmatrix} + \frac{1}{\gamma}
    \begin{bmatrix}
        C_{cl2}^T \\
        D_{cl21}^T \\
        D_{cl22}^T
    \end{bmatrix} \begin{bmatrix}
        C_{cl2} & D_{cl21} & D_{cl22}
    \end{bmatrix} \\
    &\hspace*{1.25in} + \sum^{N_\lambda}_{k=1} \begin{bmatrix}
        C_{cl1,k}^T & 0 \\
        D_{cl11,k}^T & I \\
        D_{cl12,k}^T & 0
    \end{bmatrix} \begin{bmatrix} X_k^{-1} & \\ & -X_k^{-1} \end{bmatrix} \begin{bmatrix}
        C_{cl1,k} & D_{cl11,k} & D_{cl12,k} \\
        0 & I & 0
    \end{bmatrix} < 0.
\end{aligned}
\end{align}
Partition the Lyapunov matrix $P$ and its inverse as
\begin{align*}
P = \begin{bmatrix}
S & N \\
N^T & \star
\end{bmatrix}, \qquad P^{-1} = \begin{bmatrix}
R & M \\
M^T & \star
\end{bmatrix}
\end{align*}
where $R, S \in \mathbf{S}_+^{n_{aug}}$.
It follows that $P>0$ if and only if
$\begin{bmatrix} R & I \\ I & S \end{bmatrix} > 0$, which yields condition (\ref{OF-Thm-Syn-LMI3}).

Substituting the closed-loop matrices (\ref{OF-CloseLoop}) together with the IQC scaling matrices $\{X_k\}_{k=1}^{N_\lambda}$ satisfying Assumption \ref{Ass234} into (\ref{SF-Prf-Syn-LMI*}), and applying the Schur complement, we obtain the equivalent condition
\begin{align}
\begin{bmatrix}
    {\rm He} \{PA_{cl}\}+\dot{P} & \star & \star & \star & \star \\
    B_{cl1}^T P & -\sum^{N_\lambda}_{k=1} X_k^{-1} & \star & \star & \star \\
    B_{cl2}^T P & 0 & -\gamma I_{n_d} & \star & \star \\
    C_{cl1} & D_{cl11} & D_{cl12} & -\Lambda & \star \\
    C_{cl2} & D_{cl21} & D_{cl22} & 0 & -\gamma I_{n_e}
\end{bmatrix} < 0, \label{SF-Prf-Syn-LMI**}
\end{align}
where $\Lambda = {\rm diag} \left\{ X_1, \cdots, X_k \right\}$.
The matrices $C_{cl1}$, $D_{cl11}$, and $D_{cl12}$ are formed by stacking the corresponding subsystem matrices.

Using the inequality
$-X_k^{-1} \leq - X^{-1} (X + X^T - X_k) X^{-T}$ for any nonsingular matrix $X$, condition (\ref{SF-Prf-Syn-LMI**}) is implied by the sufficient condition
\begin{align}
\begin{bmatrix}
    {\rm He} \{PA_{cl}\} + \dot{P} & \star & \star & \star & \star \\
    X^T B_{cl1}^T P & -\sum^{N_\lambda}_{k=1} (X+X^T -X_k) & \star & \star & \star \\
    B_{cl2}^T P & 0 & -\gamma I_{n_d} &\star & \star \\
    C_{cl1} & D_{cl11} X & D_{cl12} &  -\Lambda & \star \\
    C_{cl2} & D_{cl21} X & D_{cl22} & 0 & -\gamma I_{n_e}
\end{bmatrix} < 0 \label{SF-Prf-Syn-LMI3}.
\end{align}

The left-hand side of condition (\ref{SF-Prf-Syn-LMI3}) can be expressed as
\begin{align*}
\Gamma + U \Sigma V^T + V \Sigma^T U^T < 0,
\end{align*}
where $\Gamma$, and $U$, $V$ are defined as
\begin{align*}
\Gamma &= \begin{bmatrix}
    P \begin{bmatrix} A_{aug} & 0 \\ 0 & 0 \end{bmatrix} + \begin{bmatrix} A_{aug}^T & 0 \\ 0 & 0 \end{bmatrix} P + \dot{P} & \star & \star & \star & \star \\
    X \begin{bmatrix} B_{aug0}^T & 0 \end{bmatrix} P & -\sum^{N_\lambda}_{k=1} (X + X^T - X_k) & \star & \star & \star \\
    \begin{bmatrix} B_{aug1}^T & 0 \end{bmatrix} P & 0 & -\gamma I & \star & \star \\
    \begin{bmatrix} C_{aug0} & 0 \end{bmatrix} & D_{\psi2} X & 0 & -\Lambda & \star \\
    \begin{bmatrix} C_{aug1} & 0 \end{bmatrix} & -D_{p12} X & D_{p11} & 0 & -\gamma I
\end{bmatrix}, \\
U &= \begin{bmatrix}
    P \begin{bmatrix} 0 & B_{aug2} \\ I & 0 \end{bmatrix} \\
    0 \\
    0 \\
    \begin{bmatrix} 0 & D_{\psi1} \end{bmatrix} \\
    \begin{bmatrix} 0 & D_{p12} \end{bmatrix}
\end{bmatrix} \qquad
V^T = \begin{bmatrix}
    \begin{bmatrix} 0 & I \\ 0 & 0 \\ C_{aug2} & 0 \end{bmatrix} & \begin{bmatrix} 0 \\ I \\ 0 \end{bmatrix} & \begin{bmatrix} 0 \\ 0 \\ D_{p21} \end{bmatrix} & 0 & 0
\end{bmatrix}
\end{align*}
with $\Sigma = \begin{bmatrix}
    A_c & B_{c2} & B_{c1} \\
    C_c & D_{c2} & D_{c1}
\end{bmatrix}$.
The matrices $U$ and $V$ contain the plant-dependent terms, while $\Sigma$ collects all controller matrices.
The kernel spaces of $U$ and $V$ are given by
\begin{align*}
U_{\perp} &= {\rm diag} \{ P^{-1}, I, I, I, I \} \left[ \begin{array}{ccc}
    W_1 & 0 & 0 \\
    0 & 0 & 0 \\ \hdashline
    0 & I & 0 \\
    0 & 0 & I \\
    W_2 & 0 & 0 \\
    W_3 & 0 & 0
\end{array} \right], \qquad \begin{bmatrix} W_1 \\ W_2 \\ W_3 \end{bmatrix} = {\rm Ker} \begin{bmatrix} B_{aug2} \\ D_{\psi1} \\ D_{p12} \end{bmatrix}
\end{align*}
\begin{align*}
V_{\perp} &= \left[ \begin{array}{ccc}
    Y_1 & 0 & 0 \\
    0 & 0 & 0 \\ \hdashline
    0 & 0 & 0 \\
    Y_2 & 0 & 0 \\
    0 & I & 0 \\
    0 & 0 & I
\end{array} \right], \qquad \begin{bmatrix} Y_1 \\ Y_2 \end{bmatrix} = {\rm Ker} \begin{bmatrix} C_{aug2}^T \\ D_{p21}^T \end{bmatrix}.
\end{align*}
Applying the elimination lemma yields the equivalent conditions \begin{align}
U_{\perp}^T \Gamma U_{\perp} < 0 \quad \mbox{and} \quad V_{\perp}^T \Gamma V_{\perp} < 0.
\end{align}
It can be verified directly that the first condition corresponds to eq. (\ref{OF-Thm-Syn-LMI1}), while the second condition corresponds to eq. (\ref{OF-Thm-Syn-LMI2}). This completes the proof.
\end{proof}

It is clear from Theorem \ref{OF-Thm-Syn} that the proposed exact-memory output-feedback controller structure (\ref{OF-Controller}) leads to convex synthesis conditions under the IQC framework. Specifically, given a finite collection of multipliers $\{\Pi_k\}_{k=1}^{N_\lambda}$ and their corresponding state-space realizations in the form of (\ref{SF-Psi-LTI}), the synthesis conditions (\ref{OF-Thm-Syn-LMI1})-(\ref{OF-Thm-Syn-LMI3}) can be expressed as parameter-dependent LMIs in terms of the Lyapunov matrices and the scaling matrices $\{X_k\}_{k=1}^{N_\lambda}$. Consequently, the controller synthesis problem can be formulated as a convex optimization problem.
It is worth noting that the above synthesis conditions provide an alternative formulation compared with the approach in \cite{YuaW.Au17}. In particular, the LMIs derived in Theorem \ref{OF-Thm-Syn} do not explicitly depend on the controller gain matrices. This feature eliminates the need to impose a prescribed functional parameterization on the controller gains with respect to the scheduling variable $\rho$. As a result, the controller gains can be reconstructed afterward without restricting their parameter dependence a priori. This property is particularly attractive in LPV control design, since enforcing specific functional structures on controller gains may introduce unnecessary conservatism or lead to additional computational complexity.

Since the matrices $R(\cdot)$, $S(\cdot)$, $X(\cdot)$, and $X_k(\cdot)$ are parameter-dependent functions of the scheduling variable $\rho$, they must be approximated using suitable basis-function expansions. In particular, we consider the following parameterizations:
\begin{align*}
R(\rho) &= \sum_{i=1}^{n_f} f_i(\rho) R_i, \qquad S(\rho) = \sum_{i=1}^{n_g} g_i(\rho) S_i \\
X(\rho) &= \sum_{i=1}^{n_h} h_i(\rho) X_i, \qquad X_k(\rho) = \sum_{i=1}^{n_h} h_i(\rho) X_{k,i}
\end{align*}
where ${f_i(\rho)}$, ${g_i(\rho)}$, and ${h_i(\rho)}$ denote selected basis functions over the parameter set ${\cal P}$, and $R_i$, $S_i$, $X_i$, and $X_{k,i}$ are constant matrices to be determined.

To enforce the parameter-dependent LMIs over the entire parameter domain ${\cal P}$, a finite set of grid points is selected within ${\cal P}$ and the LMIs are evaluated at each grid point. This procedure converts the LPV synthesis conditions into a finite set of LMIs, leading to the following convex semidefinite programming problem:
\begin{equation}
\label{Opt}
\begin{aligned}
    &\min_{R_i, S_i, X_i, X_{k,i}, \ \forall k \in {\bf I}[1,N_\lambda]} \qquad \gamma \\
    &\hspace*{0.5in} \mbox{s.t.} \qquad \mbox{conditions (\ref{OF-Thm-Syn-LMI1})-(\ref{OF-Thm-Syn-LMI3}).}
\end{aligned}
\end{equation}
Solving this optimization problem yields the matrices defining the parameter-dependent functions $R(\rho)$, $S(\rho)$, $X(\rho)$, and $X_k(\rho)$. Based on these solutions, the LPV output-feedback controller can then be explicitly constructed using the procedure provided in the following theorem.

\begin{theorem}
\label{thm.OF-Thm-Syn-ABCD}
Suppose the parameter-dependent LMIs in Theorem 1 admit a feasible solution for a given performance level $\gamma$. Then the controller gains of a full-order ($n_c = n_{aug}$) LPV output-feedback controller of the form (\ref{OF-Controller}) can be explicitly constructed through the following procedure.
\begin{description}
\item[Step 1]
Determine parameter-dependent matrices
$\hat{D}_{c2}(\rho), \hat{D}_{c1}(\rho)$ that satisfy the feasibility condition associated with the LMI constraints obtained in Theorem 1.
{\small
\begin{align}
\label{hatD}
\Pi = -\begin{bmatrix}
-\displaystyle{\sum^{N_\lambda}_{k=1}} (X(\rho) + X^T(\rho) - X_k(\rho)) & \star & \star & \star \\
0 & -\gamma I & \star & \star \\
D_{\psi2} X(\rho) + D_{\psi1} \hat{D}_{c2}(\rho) & D_{\psi1} \hat{D}_{c1}(\rho) D_{p21}(\rho) & -\Lambda(\rho) & \star \\
-D_{p12}(\rho) X(\rho) + D_{p12}(\rho) \hat{D}_{c2}(\rho) & D_{p11}(\rho) + D_{p12}(\rho) \hat{D}_{c1}(\rho) D_{p21}(\rho) & 0 & -\gamma I
\end{bmatrix} > 0.
\end{align}
}
\item[Step 2]
Using the matrices obtained from the LMI solution together with the feedthrough matrices from Step 1, compute the intermediate matrices
$\hat{B}_{c1}(\rho), \hat{B}_{c2}(\rho)$ and $\hat{C}_c(\rho)$ from a set of linear matrix equations
{\small
\begin{align}
\left[ \begin{array}{c:c}
    0 & \begin{matrix} I & 0 & 0 & 0 \\ 0 & D_{p21}(\rho) & 0 & 0 \end{matrix} \\ \hdashline
    \begin{matrix} I & 0 \\ 0 & D_{p21}^T(\rho) \\ 0 & 0 \\ 0 & 0 \end{matrix} & -\Pi
\end{array} \right] \left[ \begin{array}{c}
    \hat{B}_{c2}^T(\rho) \\
    \hat{B}_{c1}^T(\rho) \\ \hdashline
    \star
\end{array} \right] &= - \left[ \begin{array}{c}
    0 \\
    C_{aug2}(\rho) \\ \hdashline
    0 \\
    B_{aug1}^T(\rho) S(\rho) \\
    C_{aug0} + D_{\psi1} \hat{D}_{c1}(\rho) C_{aug2}(\rho) \\
    C_{aug1}(\rho) + D_{p12}(\rho) \hat{D}_{c1}(\rho) C_{aug2}(\rho)
\end{array} \right] \label{hatB} \\
\left[ \begin{array}{c:c}
    0 & \begin{matrix} 0 & 0 & D_{\psi1}^T & D_{p12}^T(\rho) \end{matrix} \\ \hdashline
    \begin{matrix} 0 \\ 0 \\ D_{\psi1} \\ D_{p12}(\rho) \end{matrix} & -\Pi
\end{array} \right] \left[ \begin{array}{c}
    \hat{C}_{c}(\rho) \\ \hdashline
    \star
\end{array} \right] &= - \left[ \begin{array}{c}
    B_{aug2}^T(\rho) \\ \hdashline
    X^T(\rho) B_{aug0}^T(\rho) + \hat{D}_{c2}^T(\rho) B_{aug2}^T(\rho) \\
    B_{aug1}^T(\rho) + D_{p21}^T(\rho) \hat{D}_{c1}^T(\rho) B_{aug2}^T(\rho) \\
    C_{aug0} R(\rho) \\
    C_{aug1}(\rho) R(\rho) \\
\end{array} \right]. \label{hatC}
\end{align}
}
These equations can be solved using standard least-squares methods.
\item[Step 3]
The intermediate controller state matrix  $\hat{A}_c(\rho,\dot{\rho})$ is then obtained from the closed-loop consistency relations
\begin{align}
\label{hatA}
\hat{A}_c(\rho,\dot{\rho}) &= -A_{aug}^T(\rho) - C_{aug2}^T(\rho) \hat{D}_{c1}^T(\rho) B_{aug2}^T(\rho) + S(\rho) \frac{d R}{d t} + N(\rho) \frac{d M^T}{d t} + L_2 \Pi^{-1} L_1^T
\end{align}
with
\begin{align*}
\begin{bmatrix} L_1^T & L_2^T \end{bmatrix} & = \begin{bmatrix} X^T(\rho) B_{aug0}^T(\rho) + \hat{D}_{c2}^T(\rho) B_{aug2}^T(\rho) & \hat{B}_{c2}^T(\rho) \\
B_{aug1}^T(\rho) + D_{p21}^T(\rho) \hat{D}_{c1}^T(\rho) B_{aug2}^T(\rho) & B_{aug1}^T(\rho) S(\rho) + D_{p21}^T(\rho) \hat{B}_{c1}^T(\rho) \\
C_{aug0} R(\rho) + D_{\psi1} \hat{C}_c(\rho) & C_{aug0} + D_{\psi1} \hat{D}_{c1}(\rho) C_{aug2}(\rho) \\
C_{aug1}(\rho) R(\rho) + D_{p12}(\rho) \hat{C}_c(\rho) &  C_{aug1}(\rho) + D_{p12}(\rho) \hat{D}_{c1}(\rho) C_{aug2}(\rho)
\end{bmatrix}.
\end{align*}
\item[Step 4]
Finally, Finally, the actual controller matrices are obtained through a coordinate transformation
{\small
\begin{align}
\begin{bmatrix}
A_c(\rho,\dot{\rho}) & B_{c2}(\rho) & B_{c1}(\rho) \\
C_c(\rho) & D_{c2}(\rho) & D_{c1}(\rho)
\end{bmatrix} &= \begin{bmatrix}
    N(\rho) & S(\rho) B_{aug2}(\rho) \\
    0 & I
\end{bmatrix}^{-1} \nonumber \\
& \hspace*{-1.5in} \times \begin{bmatrix}
    \left\{ \begin{matrix} \hat{A}_c(\rho,\dot{\rho}) \\ - S(\rho) A_{aug}(\rho) R(\rho) \end{matrix} \right\} & \left\{ \begin{matrix} \hat{B}_{c2}(\rho) \\ - S(\rho) B_{aug0}(\rho) X(\rho) \end{matrix} \right\} & \hat{B}_{c1}(\rho) \\
    \hat{C}_c(\rho) & \hat{D}_{c2}(\rho) & \hat{D}_{c1}(\rho)
\end{bmatrix} \begin{bmatrix}
M^T(\rho) & 0 & 0 \\
0 & X(\rho) & 0 \\
C_{aug2}(\rho) R(\rho) & 0 & I
\end{bmatrix}^{-1}, \label{OF-Thm-Syn-ABCD}
\end{align}
}
where $M(\rho) N^T(\rho) = I - R(\rho) S(\rho)$.
\end{description}
\end{theorem}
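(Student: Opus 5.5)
The proof reverses the elimination step in the proof of Theorem \ref{OF-Thm-Syn}, following the explicit-reconstruction approach of Gahinet for $\mathcal{H}_\infty$ LMI synthesis, adapted to the parameter-dependent Lyapunov setting. Given the feasible $R(\rho),S(\rho),X(\rho),X_k(\rho)$, we choose $M(\rho),N(\rho)$ with $MN^T=I-RS$. Condition (\ref{OF-Thm-Syn-LMI3}) makes $I-RS$ nonsingular, so both factors are invertible for a full-order controller. We then build
\[
P=\begin{bmatrix} S & N\\ N^T & \star\end{bmatrix},\qquad
\Pi_1=\begin{bmatrix} R & I\\ M^T & 0\end{bmatrix},\qquad
\Pi_2=\begin{bmatrix} I & S\\ 0 & N^T\end{bmatrix},
\]
so that $P\Pi_1=\Pi_2$ and $P>0$. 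We then apply the congruence ${\rm diag}\{\Pi_1,X,I,I,I\}$ to the sufficient condition (\ref{SF-Prf-Syn-LMI3}). The factor $X$ on the $w$-channel accounts for the $X$ already present in (\ref{SF-Prf-Syn-LMI3}) and the right factor in (\ref{OF-Thm-Syn-ABCD}).

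The key step is the standard linearizing change of variables, which puts every controller-dependent term into affine form in new variables $\hat A_c,\hat B_{c1},\hat B_{c2},\hat C_c,\hat D_{c1},\hat D_{c2}$. The defining map is
\[
\begin{bmatrix}\hat A_c & \hat B_{c2}&\hat B_{c1}\\ \hat C_c&\hat D_{c2}&\hat D_{c1}\end{bmatrix}
=\begin{bmatrix} N & SB_{aug2}\\0&I\end{bmatrix}\Sigma\begin{bmatrix}M^T&0&0\\0&X&0\\C_{aug2}R&0&I\end{bmatrix}
+\begin{bmatrix} SA_{aug}R & SB_{aug0}X & 0\\ 0&0&0\end{bmatrix},
\]
which is exactly (\ref{OF-Thm-Syn-ABCD}) read backwards. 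Both outer factors are invertible because $M$ and $N$ are invertible, so $\Sigma$ is uniquely recovered from the hatted variables. The term $\Pi_1^T\dot P\Pi_1$ produces $\dot R$ and $\dot S$ together with the cross terms $S\dot R+N\dot M^T$. These cross terms are absorbed into $\hat A_c$, which is why $A_c$ depends on $\dot\rho$ and why (\ref{hatA}) contains $S\frac{dR}{dt}+N\frac{dM^T}{dt}$.

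After the change of variables, the transformed matrix has a specific block structure. Its $(1,1)$ block contains the LMI (\ref{OF-Thm-Syn-LMI1}) data in the $R$-row and the LMI (\ref{OF-Thm-Syn-LMI2}) data in the $S$-row. Its off-diagonal $(1,2)$ block is $\hat A_c^T+A_{aug}+\dots$, which is free through $\hat A_c$. The remaining blocks form the matrix $-\Pi$ of (\ref{hatD}) in the $(w,d,z,e)$ channels, coupled through $L_1$ and $L_2$. We then construct the certificate in three parts.

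First, the matrix $\Pi$ in (\ref{hatD}) is the lower-right block of the transformed inequality, so it must be positive definite. By the projection lemma applied to this block alone, some $\hat D_{c1},\hat D_{c2}$ make it positive definite. The block's kernel conditions are the bottom-right parts of (\ref{OF-Thm-Syn-LMI1})–(\ref{OF-Thm-Syn-LMI2}), since $\mathcal N_R,\mathcal N_S$ restricted to those rows give exactly the projections needed.

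Second, we take the Schur complement with respect to $\Pi$. The conditions become two diagonal blocks, $R$-block $+L_1\Pi^{-1}L_1^T<0$ and $S$-block $+L_2\Pi^{-1}L_2^T<0$, plus a coupling block. We choose $\hat C_c$ and $\hat B_{c1},\hat B_{c2}$ so that the diagonal blocks are negative definite. Following Gahinet's explicit formulas, this reduces to the linear equations (\ref{hatB})–(\ref{hatC}). Their solvability uses the bordered, KKT-type structure of these systems together with the nonsingularity of $\Pi$ and the full-rank hypotheses on $D_{p12},D_{p21}$ from Assumption \ref{Ass1}. By Finsler's lemma, the Schur-complemented diagonal blocks are then negative definite, because (\ref{OF-Thm-Syn-LMI1})–(\ref{OF-Thm-Syn-LMI2}) hold on the relevant kernels.

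Third, we choose $\hat A_c$ as in (\ref{hatA}) so that the coupling block after the Schur complement vanishes. The whole inequality is then block diagonal with negative blocks, so (\ref{SF-Prf-Syn-LMI3}) holds. Theorem \ref{OF-Thm-Syn}'s chain of implications then gives stability and an $\mathcal L_2$ gain below $\gamma$.

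The main obstacle is the second part: showing that (\ref{hatB})–(\ref{hatC}) are consistent and that their solutions actually yield negative definiteness of the diagonal blocks rather than mere stationarity. This requires a careful check that the projection structure of $\mathcal N_R,\mathcal N_S$ matches the constraint rows of the bordered systems. The first thing to try is to compute the Schur complement explicitly and complete the square in $\hat C_c$ and $\hat B_c$. A secondary technical point is ensuring that the resulting gains are continuous in $\rho$, which follows from the invertibility of $M$, $N$ and $\Pi$ on the compact set $\mathcal P$.
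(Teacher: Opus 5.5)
Your proposal is correct and follows essentially the paper's route: congruence by your $\Pi_1$ (the paper's $Z_1$), the change of variables (\ref{OF-Prf-Syn-hatABCD}), a Schur complement with respect to $\Pi$, decoupling through $\hat A_c$, Gahinet's bordered equations (\ref{hatB})--(\ref{hatC}), and inversion of the variable change; your projection-lemma argument for $\hat D_{c1},\hat D_{c2}$ and your proposed completion of squares supply details the paper only cites, since the bordered systems are nonsingular by the rank conditions of Assumption~\ref{Ass1} and their solutions minimize the Schur-complemented diagonal blocks in the Loewner order, so they succeed whenever any choice does. Three small fixes: since (\ref{SF-Prf-Syn-LMI3}) already contains the $X$-scaling of the $w$-channel, the congruence applied to it should be ${\rm diag}\{\Pi_1,I,I,I,I\}$, not ${\rm diag}\{\Pi_1,X,I,I,I\}$; invertibility of the right factor in (\ref{OF-Thm-Syn-ABCD}) also needs $X$ nonsingular, which follows from the pure-$w$ directions of (\ref{OF-Thm-Syn-LMI1}) giving $X+X^T>0$; and the $\hat A_c$ that actually cancels the coupling block $H_{21}+L_2\Pi^{-1}L_1^T$ of (\ref{temp}) contains $-L_2\Pi^{-1}L_1^T$, so the sign printed in (\ref{hatA}) must be flipped.
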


\begin{proof}
To convert the synthesis condition in Theorem \ref{OF-Thm-Syn}  into a convex LMI with respect to all design variables, we set the controller order as $n_c = n_{aug}$ and partition the Lyapunov matrix as
\begin{align*}
P &= \begin{bmatrix}
    S & N \\
    N^T & X^{-1}
\end{bmatrix},
\end{align*}
where $S, X^{-1} \in \mathbf{S}_+^{n_{aug}}$. Define the transformation matrices
\begin{align*}
Z_1 &= \begin{bmatrix}
    R & I \\
    M^T & 0
\end{bmatrix}, \quad Z_2 = \begin{bmatrix}
    I & S \\
    0 & N^T
\end{bmatrix}.
\end{align*}
These matrices satisfy $P Z_1 = Z_2$ and $M N^T = I_{n_{aug}} - R S$, which further implies $X^{-1} = -N^T R M^{-T}$.
From condition (\ref{OF-Thm-Syn-LMI3}) it follows that $Z_1^T P Z_1 = \begin{bmatrix}
    R & I \\
    I & S \end{bmatrix} > 0$,
Since $Z_1$ is nonsingular, this condition guarantees
$P > 0$.

Pre- and post-multiplying inequality (\ref{SF-Prf-Syn-LMI3}) by ${\rm diag} \{Z_1, I_{n_u}, I_{n_d}, I_{n_e},I_{N_\lambda n_u}\}$ and its transpose yields the following transformed terms:
\begin{align*}
Z_1^T \frac{d P}{d t} Z_1 &= \begin{bmatrix}
    -\frac{d R}{d t} & \star \\
    -\left( S \frac{d R}{d t} + N \frac{d M^T}{d t} \right) & \frac{d S}{d t}
\end{bmatrix} \\
Z_1^T P A_{cl}Z_1 &= Z_2^T A_{cl} Z_1 \\
&= \begin{bmatrix}
    A_{aug} R + B_{aug2} \hat{C}_c & A_{aug} + B_{aug2} \hat{D}_{c1} C_{aug2} \\
    \hat{A}_c & S A_{aug} + \hat{B}_{c1} C_{aug2}
\end{bmatrix}, \\
X^T B_{cl1}^T P Z_1 &= X^T B_{cl1}^T Z_2 \\
&= \begin{bmatrix}
    X^T B_{aug0}^T + \hat{D}_{c2}^T B_{aug2}^T & \hat{B}_{c2}^T
\end{bmatrix}, \\
B_{cl2}^T P Z_1 &= B_{cl2}^T Z_2 \\
&= \begin{bmatrix}
    B_{aug1}^T + D_{p21}^T \hat{D}_{c1}^T B_{aug2}^T & B_{aug1}^T S + D_{p21}^T \hat{B}_{c1}^T
\end{bmatrix}, \\
C_{cl1} Z_1 &= \begin{bmatrix}
    C_{aug0} R + D_{\psi1} \hat{C}_c & C_{aug0} + D_{\psi1} \hat{D}_{c1} C_{aug2}
\end{bmatrix} \\
C_{cl2} Z_1 &= \begin{bmatrix}
    C_{aug1} R + D_{p12} \hat{C}_c & C_{aug1} + D_{p12} \hat{D}_{c1} C_{aug2}
\end{bmatrix}, \\
D_{cl11} X &= D_{\psi2} X + D_{\psi1} \hat{D}_{c2}, \qquad D_{cl12} = D_{\psi1} \hat{D}_{c1} D_{p21} \\
D_{cl21} X &= -D_{p12} X + D_{p12} \hat{D}_{c2}, \qquad D_{cl22} = D_{p11} + D_{p12} \hat{D}_{c1} D_{p21}.
\end{align*}
The auxiliary controller variables are defined by
\begin{align}
\label{OF-Prf-Syn-hatABCD}
    \begin{bmatrix}
        \hat{A}_c & \hat{B}_{c2} & \hat{B}_{c1} \\
        \hat{C}_c & \hat{D}_{c2} & \hat{D}_{c1}
    \end{bmatrix} &= \begin{bmatrix}
        S A_{aug} R & S B_{aug0} X & 0 \\
        0 & 0 & 0
        \end{bmatrix} + \begin{bmatrix}
        N & S B_{aug2} \\
        0 & I
    \end{bmatrix} \begin{bmatrix}
        A_c & B_{c2} & B_{c1} \\
        C_c & D_{c2} & D_{c1}
    \end{bmatrix} \begin{bmatrix}
        M^T & 0 & 0 \\
        0 & X & 0 \\
        C_{aug2} R & 0 & I
    \end{bmatrix}.
\end{align}
Applying the Schur complement on eq. (\ref{SF-Prf-Syn-LMI3}) with respect to the third through sixth block rows and columns yields
\begin{align}
\label{temp}
\begin{bmatrix}
H_{11} + L_1 \Pi^{-1} L_{1}^T & H_{12} + L_{1} \Pi^{-1} L_2 \\
H_{21} + L_2 \Pi^{-1} L_{1}^T & H_{22} + L_{2} \Pi^{-1} L_2
\end{bmatrix} < 0,
\end{align}
where
\begin{align*}
H_{11} &= {\rm He} \{ A_{aug} R + B_{aug2} \hat{C}_c \} - \frac{d R}{d t} \\
H_{21} &= H_{12}^T = \hat{A}_c + A_{aug}^T + C_{aug2}^T \hat{D}_{c1}^T B_{aug2}^T -S \frac{d R}{d t} - N \frac{d M^T}{d t} \\
H_{22} &= {\rm He} \{ S A_{aug} + \hat{B}_{c1} C_{aug2} \} + \frac{d S}{d t} \\
L_1 & = \begin{bmatrix} B_{aug0} X + B_{aug2} \hat{D}_{c2} & B_{aug1} + B_{aug2} \hat{D}_{c1} D_{p21} & R C_{aug0}^T + \hat{C}_c^T D_{\psi1}^T & R C_{aug1}^T + \hat{C}_c^T D_{p12}^T \end{bmatrix} \\
L_2 &= \begin{bmatrix} \hat{B}_{c2} & S B_{aug1} + \hat{B}_{c1} D_{p21} & C_{aug0}^T + C_{aug2}^T \hat{D}_{c1}^T D_{\psi1}^T & C_{aug1}^T + C_{aug2}^T \hat{D}_{c1}^T D_{p12}^T
\end{bmatrix}.
\end{align*}
In eq. (\ref{temp}), The off-diagonal block $(1,2)$ can be eliminated by appropriately selecting $\hat{A}_c$ as given in (\ref{hatA}).
This choice decouples the inequality into two independent matrix conditions.
Consequently, the $(1,1)$ block depends quadratically on $\hat{C}_c$, while the $(2,2)$ block depends quadratically on  $\hat{B}_{c2}, \hat{B}_{c1}$.
Therefore, the controller variables can be obtained from the least-squares solutions of equations (\ref{hatB})-(\ref{hatC}) following the approach in \cite{Gah.Au96}.

Finally, the controller realization (\ref{OF-Thm-Syn-ABCD}) follows by inverting the relations in (\ref{OF-Prf-Syn-hatABCD}).
\end{proof}

The above reconstruction procedure shows that, although the synthesis LMIs do not explicitly involve the controller gains, a realizable LPV controller can always be recovered from the feasible solution.

\begin{remark}
The controller state matrix $A_c(\rho, \dot{\rho})$ generally depends on both $\rho$ and its derivative $\dot{\rho}$. This dependence can be eliminated by fixing $R(\rho)$ or $S(\rho)$ as constants.

The IQC-based framework provides flexibility to handle different types of delays (bounded, unbounded, continuous, discontinuous, etc.) by selecting appropriate IQC multipliers. Improving the IQC characterization directly improves synthesis accuracy and closed-loop performance.
\end{remark}

\subsection{Memoryless Control}

We further consider the case of memoryless delay control, which arises when the delay signal $\tau(t)$ is not available for feedback. In this situation, the controller cannot incorporate the delayed input information through the internal delay loop introduced in the previous subsection. As a result, the controller reduces to a standard dynamic output-feedback law that depends only on the measured output $y$. Within the proposed IQC framework, this leads to a robust stabilization problem for LPV systems with uncertain input delay using memoryless output feedback.
The LPV memoryless controller is in the form of
\begin{align}
\label{OF-Controller-memoryless}
\begin{bmatrix}
    \dot{x}_c \\
    u
\end{bmatrix} &= \begin{bmatrix}
    A_c & B_{c1} \\
    C_c & D_{c1}
\end{bmatrix} \begin{bmatrix}
    x_c \\
    y
\end{bmatrix}.
\end{align}

The corresponding synthesis condition can be obtained as a special case of Theorem \ref{OF-Thm-Syn} by enforcing $B_{c2} \equiv 0$ and $D_{c2} \equiv 0$, which effectively removes the delay loop from the controller structure. The result is summarized in the following corollary.

\begin{corollary}
\label{OF-Cor-Syn}
Consider the input-delayed LPV plant (\ref{Plant}). If there exist positive-definite matrix functions $R, S: \mathbf{R}_+ \rightarrow \mathbf{S}_+^{n_p+n_{\psi}}$, positive definite matrix functions $X_k \in \mathbf{S}_+^{n_u\times n_u}$ for all $k\in{\bf I}[1,N_\lambda]$ and a rectangular matrix function $X \in \mathbf{R}^{n_u \times n_u}$, such that conditions (\ref{OF-Cor-Syn-LMI1})-(\ref{OF-Cor-Syn-LMI3}) hold.
{\small
\begin{align}
{\cal N}_R^T \begin{bmatrix}
    \left\{ \begin{matrix} {\rm He} \{ A_{aug}(\rho) R(\rho) \} \\ -\{ \underline{\nu}, \bar{\nu} \} \frac{\partial R}{\partial \rho} \end{matrix} \right\} & \star & \star & \star & \star \\
    X^T(\rho) B_{aug0}^T(\rho) & -\displaystyle{\sum^{N_\lambda}_{k=1}} (X(\rho) + X^T(\rho) - X_k(\rho)) & \star & \star & \star \\
    B_{aug1}^T(\rho) & 0 & -\gamma I & \star & \star \\
    C_{aug0} R(\rho) & D_{\psi2} X(\rho) & 0 & -\Lambda(\rho) & \star \\
    C_{aug1}(\rho) R(\rho) & -D_{p12}(\rho) X(\rho) & D_{p11}(\rho) & 0 & -\gamma I
    \end{bmatrix} {\cal N}_R &< 0 \label{OF-Cor-Syn-LMI1} \\
\tilde{\cal N}_S^T \begin{bmatrix}
    \left\{ \begin{matrix} {\rm He} \{ S(\rho) A_{aug}(\rho) \} \\ + \{ \underline{\nu}, \bar{\nu} \} \frac{\partial S}{\partial \rho} \end{matrix} \right\} & \star & \star & \star & \star \\
    C_{aug0} & -\Lambda(\rho) & \star & \star & \star \\
    C_{aug1}(\rho) & 0 & -\gamma I & \star & \star \\
    X^T(\rho) B_{aug0}^T(\rho) S(\rho) & X^T(\rho) D_{\psi2}^T & -X^T(\rho) D_{p12}^T(\rho) &                   \left\{ \begin{matrix} -\displaystyle{\sum^{N_\lambda}_{k=1}} (X(\rho) + X^T(\rho) \\ - X_k(\rho)) \end{matrix} \right\} & \star \\
    B_{aug1}^T(\rho) S(\rho) & 0 & D_{p11}^T(\rho) & 0 & -\gamma I
    \end{bmatrix} \tilde{\cal N}_S &< 0 \label{OF-Cor-Syn-LMI2} \\
\begin{bmatrix}
    R(\rho) & I \\
    I & S(\rho)
    \end{bmatrix} &> 0 \label{OF-Cor-Syn-LMI3}
\end{align}
}
where
$\tilde{\cal N}_S = {\rm Ker} \begin{bmatrix} C_{aug2}(\rho) & 0 & 0 & 0 & D_{p21}(\rho) \end{bmatrix}$.
Then the input-delayed system (\ref{Plant}) is stabilized by a memoryless output-feedback controller
(\ref{OF-Controller-memoryless})
of the order $n_c = n_{aug}$, and achieves an $\mathcal{L}_2$ gain less than $\gamma$.
Moreover, the controller coefficient matrices $A_c, B_{c1}, C_{c}, D_{c1}$ can be recovered using the same reconstruction procedure as in Theorem \ref{thm.OF-Thm-Syn-ABCD} by setting  $B_{c2} = 0$ and $D_{c2} = 0$.
\end{corollary}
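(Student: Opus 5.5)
The plan is to repeat the proof of Theorem \ref{OF-Thm-Syn} with the delay loop removed from the controller. The key change is that $w$ is then no longer a controller input. We start from the same dissipation/IQC condition (\ref{SF-Prf-Syn-LMI*}), taken from \cite{PfiS.IJRNC15}, for the extended closed loop. The closed-loop matrices are again given by (\ref{OF-CloseLoop}), but now $B_{c2}\equiv 0$ and $D_{c2}\equiv 0$.

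Next we pass to the sufficient condition (\ref{SF-Prf-Syn-LMI3}). This uses the same Schur complement step and the same bound $-X_k^{-1}\le -X^{-1}(X+X^T-X_k)X^{-T}$ as before. We then write the condition in the form $\Gamma+U\Sigma V^T+V\Sigma^TU^T<0$ with the reduced variable
\[
\Sigma=\begin{bmatrix} A_c & B_{c1}\\ C_c & D_{c1}\end{bmatrix}.
\]
The matrix $\Gamma$ and the left factor $U$ are the same as in Theorem \ref{OF-Thm-Syn}. Hence $U_\perp$ is unchanged, and $U_\perp^T\Gamma U_\perp<0$ reproduces (\ref{OF-Cor-Syn-LMI1}) exactly as (\ref{OF-Thm-Syn-LMI1}) did.

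The only real difference is in the right factor $V^T$. It loses the block row $[0\;\;I\;\;0]$ that multiplied $B_{c2},D_{c2}$, and keeps only the rows $[0\;\;I]$ and $[C_{aug2}\;\;0\;\;\cdot\;\;D_{p21}]$. Its kernel $V_\perp$ therefore has to contain the $w$-direction. Concretely, $V_\perp$ is built from $\mathrm{Ker}[\,C_{aug2}\;\;0\;\;0\;\;0\;\;D_{p21}\,]$, ordered so as to include the $w$ coordinate, together with identity blocks on the $z$ and $e$ channels.

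Computing $V_\perp^T\Gamma V_\perp$ then brings in the rows $X^TB_{aug0}^TS$, $X^TD_{\psi2}^T$ and $-X^TD_{p12}^T$, and the block $-\sum_k(X+X^T-X_k)$. After permuting rows and columns we obtain (\ref{OF-Cor-Syn-LMI2}) with $\tilde{\cal N}_S$.

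For the coupling condition, we partition $P$ and $P^{-1}$ as in the proof of Theorem \ref{OF-Thm-Syn}. This gives (\ref{OF-Cor-Syn-LMI3}) and $P>0$. The elimination lemma then guarantees that a feasible $\Sigma$ exists, so the closed loop is stable with $\mathcal{L}_2$ gain less than $\gamma$.

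For the reconstruction claim, we use the transformation of Theorem \ref{thm.OF-Thm-Syn-ABCD} with $\hat D_{c2}=0$. By (\ref{OF-Prf-Syn-hatABCD}), $\hat B_{c2}=SB_{aug0}X$ is then fixed rather than free. Equations (\ref{hatD})--(\ref{OF-Thm-Syn-ABCD}) are applied with these substitutions, and we check that the solvability conditions for $\hat C_c$ and $\hat B_{c1}$ are exactly the projected inequalities above.

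The main obstacle I expect is this last consistency check. Once $\hat B_{c2}$ is no longer a free variable, the $(2,2)$ block in (\ref{temp}) has to be handled through the enlarged kernel $\tilde{\cal N}_S$. The variable $X$ also appears jointly with $S$ there, in the term $X^TB_{aug0}^TS$. This is exactly the source of the non-convexity flagged in the text, so the corollary can only be claimed as a sufficient condition.

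To handle this, I would first verify directly that $V_\perp^T\Gamma V_\perp$ reproduces the $w$-row entries of (\ref{OF-Cor-Syn-LMI2}). Then I would treat the Step 1--2 least-squares equations as parametrized by the fixed $\hat B_{c2}$.
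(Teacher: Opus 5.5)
Your proposal is correct and follows the paper's route. The paper simply asserts the corollary as the special case $B_{c2}\equiv 0$, $D_{c2}\equiv 0$ of Theorem~\ref{OF-Thm-Syn}, and your re-run of the elimination step is what that assertion amounts to: $\Gamma$ and $U_\perp$ are unchanged, and $V_\perp$ is enlarged to contain the $w$-direction, which is exactly where $X^TB_{aug0}^TS$ enters. Your check that fixing $\hat B_{c2}=SB_{aug0}X$ and $\hat D_{c2}=0$ still leaves the Theorem~\ref{thm.OF-Thm-Syn-ABCD} reconstruction solvable is more careful than the paper. One small correction: that bilinear term is why the conditions are non-convex, not why the result is only sufficient. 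Sufficiency already comes from the slack bound on $-X_k^{-1}$ and the IQC/dissipation argument, as in Theorem~\ref{OF-Thm-Syn}.
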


It is important to note that, under the IQC-based formulation adopted in this work, the resulting synthesis conditions for memoryless output-feedback control are not convex. In particular, the term $X^T(\rho) B_{aug0}^T(\rho) S(\rho)$ appearing in the $(4,1)$ block of (\ref{OF-Cor-Syn-LMI2}) introduces a bilinear coupling between the decision variables $X(\rho)$ and $S(\rho)$. Consequently, the synthesis conditions involve bilinear matrix inequalities (BMIs) rather than LMIs, which significantly increases the computational difficulty of the problem and generally prevents direct solution via convex optimization.

This observation underscores the importance of the proposed delay-dependent LPV controller structure, which enables convex synthesis conditions within the IQC framework, whereas the memoryless formulation inevitably leads to non-convex BMI constraints.

\section{Numerical Study}
\label{Sec.Ex}

In this example, we aim to demonstrate the advantages of the proposed IQC-based design method by exploiting information on parameter variations.
The benefits of the IQC-based delay-dependent control approach over existing LKF-based design methods have previously been demonstrated for state- and input-delayed LTI systems in terms of the maximum allowable delay bound
$\bar{\tau}$ under different delay-rate specifications
$r$ \cite{YuaW.TCYB16,YuaW.Au17}.
Here we illustrate its effectiveness in an LPV setting.

To this end, consider the following linear system with time-varying input delay:
\begin{align}
\label{Ex1.Plant}
\begin{aligned}
\dot{x}_p(t) &=
    \begin{bmatrix}
        0 & 1+\phi \rho(t) \\
        -2 & -3+\sigma \rho(t)
    \end{bmatrix} x_p(t) +
    \begin{bmatrix}
        0.1 & 0 \\ 0.1 & 0
    \end{bmatrix} d(t) + \begin{bmatrix} 0.1 + \phi \rho^2(t) \\ -0.2 + \sigma \rho(t) \end{bmatrix}
        u(t - \tau(t)) \\
        e(t) &= \begin{bmatrix} 0 & 10 \\ 0 & 0 \end{bmatrix} x_p(t) + \begin{bmatrix} 0 \\ 0.1 \end{bmatrix} u(t) \\
        y(t) &= \begin{bmatrix} 1 & 0 \end{bmatrix} x_p(t) + \begin{bmatrix} 0 & 0.05 \end{bmatrix} d(t).
\end{aligned}
\end{align}
where $\phi = 0.2, \sigma = 0.1$ and the scheduling parameter satisfies $\rho \in {\cal P} = [-5, 5]$ and the time-varying delay $\tau(t)$ satisfies $\tau(t)\in [0,\bar{\tau}]$ and $|\dot{\tau}(t)|\leq r < 2$ for all $t\geq 0$.

Under the IQC framework, the delay operator can be characterized using multipliers from \cite{KaoR.Au07}.
Since $\tau(t)$ is bounded with bounded rate of variation, we select two IQC multipliers
\begin{align}
\label{Ex1.Pi123}
\begin{aligned}
    \Pi_1(s) &= \begin{bmatrix}
        |\phi(s)|^2 & 0 \\
        0 & -1
    \end{bmatrix}, \quad
    \Pi_2(s) = \begin{bmatrix}
        |\varphi(s)|^2 & 0 \\
        0 & -1
    \end{bmatrix}
\end{aligned}
\end{align}
where
\[
\phi(s) = k_1\left(\frac{\bar{\tau}^2s^2 + c_1\bar{\tau} s}{\bar{\tau}^2s^2 + a_1\bar{\tau} s + k_1c_1}\right) + \epsilon, \qquad \varphi(s) = k_2\left(\frac{\bar{\tau}^2s^2 + c_2\bar{\tau} s}{\bar{\tau}^2s^2 + a_2\bar{\tau} s + b_2}\right) +\delta.
\]
The parameters are chosen as $k_1 = 1 + \frac{1}{\sqrt{1-r}}, a_1 = \sqrt{2k_1c_1}$, where $c_1$ is any positive number satisfying $c_1< 2k_1$; $k_2 = \sqrt{\frac{8}{2-r}}, a_2 = \sqrt{6.5 + 2b_2}, b_2 = \sqrt{50}, c_2 = \sqrt{12.5}$, and $\epsilon, \delta$ are small positive constants.
In this example we select $c_1 = 1, \epsilon = 10^{-7}, \delta = 0.001$.
It can be verified that the selected multipliers satisfy Assumption \ref{Ass234}.
Applying the $J_{n_u,n_u}$-spectral factorization \cite{Sei.TAC14,PfiS.Au15} yields the corresponding factorizations
\begin{align}
\label{Ex1.Psi12}
\begin{aligned}
    \Psi_1(s) &= \begin{bmatrix}
        \frac{\frac{k_1(c_1-a_1)}{\bar{\tau}}s - k_1^2c_1/\bar{\tau}^2}{s^2 + \frac{a_1}{\bar{\tau}}s + k_1c_1/\bar{\tau}^2} + k_1 + \epsilon & 0 \\
        0 & 1
    \end{bmatrix}, \quad
    \Psi_2(s) = \begin{bmatrix}
        \frac{\frac{k_2(c_2-a_2)}{\bar{\tau}}s - k_2b_2/\bar{\tau}^2}{s^2 + \frac{a_2}{\bar{\tau}}s + b_2/\bar{\tau}^2} + k_2 + \delta & 0 \\
        0 & 1
    \end{bmatrix}.
\end{aligned}
\end{align}
These transfer matrices can be converted into state-space form, leading to the IQC-induced dynamics in (\ref{SF-Psi-LTI}) with system matrices
\begin{align}
\label{Ex1.xpsiABCD}
\begin{aligned}
    A_\psi &=
    \begin{bmatrix}
        0 & 1 & 0 & 0 \\
        -\frac{k_1c_1}{\bar{\tau}^2} & -\frac{a_2}{\bar{\tau}} & 0 & 0 \\
        0 & 0 & 0 & 1 \\
        0 & 0 & -\frac{b_2}{\bar{\tau}^2} & -\frac{a_2}{\bar{\tau}}
    \end{bmatrix}, \quad
    B_{\psi1} = \begin{bmatrix}
        0 \\ 1 \\ 0 \\ 1
    \end{bmatrix}, \quad
    B_{\psi2} = 0, \\
    C_{\psi,1} &= \begin{bmatrix}
        -\frac{k_1^2c_1}{\bar{\tau}^2} & \frac{k_1(c_1-a_1)}{\bar{\tau}} & 0 & 0
    \end{bmatrix}, \quad
    D_{\psi1,1} = k_1 + \epsilon, \quad
    D_{\psi2,1} = 0, \\
    C_{\psi,2} &= \begin{bmatrix}
        0 & 0 & -\frac{b_2k_2}{\bar{\tau}^2} & \frac{k_2(c_2-a_2)}{\bar{\tau}}
    \end{bmatrix}, \quad
    D_{\psi1,2} = k_2+\delta, \quad
    D_{\psi2,2} = 0.
\end{aligned}
\end{align}
For $r \leq 1$, both IQC multipliers in (\ref{Ex1.Psi12}) are used for controller synthesis, whereas for $r > 1$ only $\Pi_2$ is employed.

We next compare controller synthesis using constant and parameter-dependent Lyapunov functions within the IQC framework.
Two design methods are considered:
\begin{enumerate}
\item
Delay-dependent LPV control using a quadratic Lyapunov function;
\item
Delay-dependent LPV control using parameter-dependent Lyapunov functions, with different basis selections for $R(\cdot), S(\cdot)$ and $h_1(\rho) = 1, \quad h_2(\rho) = \rho$
\begin{align*}
\mbox{Case 1}: \quad f_1(\rho) &= 1, \ f_2(\rho) = \rho, \qquad g_1(\rho) = 1 \\
\mbox{Case 2}: \quad f_1(\rho) &= 1, \qquad g_1(\rho) = 1, \ g_2(\rho) = \rho.
\end{align*}
\end{enumerate}
The parameter space ${\cal P}$ is uniformly gridded by 41 points.

The resulting closed-loop ${\cal L}_2$-gain bounds are summarized in Table \ref{Tab.gamma}. The controller synthesis problems are formulated as convex optimization programs in terms of LMIs and can therefore be solved efficiently using standard semidefinite programming solvers. The results show that parameter-dependent Lyapunov functions generally yield less conservative performance bounds than the quadratic Lyapunov function formulation, with significant  improvement when the parameter variation rate is small. This improvement stems from the ability of parameter-dependent Lyapunov functions to capture the dependence of the system dynamics on the scheduling parameter. By allowing the Lyapunov matrix to vary with $\rho$, the resulting stability certificate becomes less conservative than that obtained with a constant Lyapunov function. It is also observed that Case~1 provides better closed-loop performance than Case~2, indicating that introducing parameter dependency in $R(\cdot)$ is more effective in exploiting parameter variation information.

On the other hand, increasing either the maximum delay bound $\bar{\tau}$ or the delay derivative bound $r$ leads to larger values of $\gamma$, indicating degraded performance. Moreover, the proposed IQC-based LPV control method can handle input-delayed systems with delay derivative bound $r>1$, whereas traditional delay control methods \cite{WuG.Au01,BriSL.SCL10} fail to stabilize such systems. Therefore, the IQC-based design scheme not only provides improved performance levels across all tested combinations of $(\bar{\tau},r)$, but also significantly enhances the stability robustness of input-delayed LPV systems. Overall, this example illustrates that the proposed IQC-based LPV synthesis framework can effectively exploit parameter variation as well as allowable delay and delay-derivative information, while retaining a computationally efficient convex formulation, leading to improved stability margins and performance guarantees for LPV systems with time-varying input delays.

\begin{table}[htb]
\centering
\caption{Performance comparison of different delay dependent control methods}
\label{Tab.gamma}
\begin{tabular}{c|p{0.6in}|cccccc} \hline \hline
Method & parameter variation rate $\nu$ & \multicolumn{6}{c}{(Delay derivative $r$, delay bound $\tau$)}  \\ \hline
& & (0, 5) & (0.5, 2.5) & (0.9, 1) & (1.5, 1) & (1.7, 2.5) \\ \hline
Quadratic LF & & 7.7879 & 4.0138 & 3.1781 & 5.581 & 11.111 \\ \hline
\multirow{3}{1.0in}{Parameter-dependent LF (Case 1)}
& 0.1 & 5.013 & 2.9761 & 2.1629 & 4.6001 & 9.37 \\
& 1 & 5.0904 & 3.0049 & 2.3035 & 4.6411 & 9.4005 \\
& 10 & 6.4037 & 3.607 & 2.7689 & 5.2586 & 9.8518 \\ \hline
\multirow{3}{1.0in}{Parameter-dependent LF (Case 2)} & 0.1 & 7.2295 & 4.0124 & 3.1861 & 5.4687 & 10.907 \\
& 1 & 7.2443 & 4.0126 & 3.2093 & 5.4747 & 10.926 \\
& 10 & 7.251 & 4.0129 & 3.2616 & 5.5289 & 11.103 \\ \hline \hline
\end{tabular}
\end{table}



Time-domain simulations illustrating the resulting closed-loop transient performance will be included in the final version.

\section{Conclusions}
\label{Sec.Con}

This paper has presented an IQC-based framework for $\mathcal{H}_\infty$ output-feedback control of LPV systems with time-varying input delays. By integrating parameter-dependent Lyapunov functions with dynamic IQC representations of the delay operators, the proposed method enables the derivation of convex synthesis conditions expressed as parameter-dependent linear matrix inequalities (LMIs). The convexity of these conditions is a direct consequence of the proposed exact memory controller structure, revealing the importance of such a design in enabling systematic and tractable control synthesis for LPV systems. In contrast, for memoryless control, where the input delay is not directly measurable, the synthesis problem becomes non-convex and difficult to solve,  underscoring the value of the proposed structured controller in achieving convexity.

Numerical examples illustrate the effectiveness and advantages of the proposed approach, including enhanced closed-loop performance, improved computational tractability, and broad applicability to LPV systems with parameter-dependent dynamics. Overall, the results demonstrate that the IQC-based methodology, combined with the proposed controller structure, provides a unified, systematic, and computationally efficient approach for both analysis and synthesis of LPV systems with time-varying input delays, offering a powerful alternative to traditional Lyapunov–Krasovskii-based methods.

%

\end{document}